\newcommand\myshade{70}
\theoremstyle{plain}
\newtheorem{theorem}{Theorem}
\newtheorem{lemma}[theorem]{Lemma}
\newtheorem{proposition}[theorem]{Proposition}
\theoremstyle{definition}
\newtheorem{definition}{Definition}
\newtheorem{remark}[definition]{Remark}
\newcommand{\FF}{\mathbb{F}}
\newcommand{\BB}{\mathbb{B}}
\DeclareMathAlphabet{\mathbfsl}{OT1}{ppl}{b}{it} 
\newcommand{\vy}{\mathbfsl{y}}
\newcommand{\vzero}{\mathbf{0}}
\newcommand{\vc}{\mathbfsl{c}}
\newcommand{\vH}{\mathbfsl{H}}
\newcommand{\cA}{\mathcal{A}}
\newcommand{\cC}{\mathcal{C}}
\newcommand{\cF}{\mathcal{F}}
\newcommand{\cL}{\mathcal{L}}
\newcommand{\cT}{\mathcal{T}}
\newcommand{\vlambda}{{\pmb{\lambda}}}
\newcommand{\vmu}{{\pmb{\mu}}}
\newcommand{\rs}{{\rm RS}}
\newcommand{\grs}{{\rm GRS}}
\newcommand{\tr}{{\rm Tr}}
\newcommand{\floor}[1]{{\left\lfloor #1\right\rfloor}}
\newcommand{\ceil}[1]{{\left\lceil #1\right\rceil}}
\newcolumntype{Y}{>{\centering\arraybackslash}X} 
\title{Repair of Reed-Solomon Codes in the Presence of Erroneous Nodes\\[-3mm]}
\author{
  \IEEEauthorblockN{Stanislav Kruglik\IEEEauthorrefmark{1}, Gaojun Luo\IEEEauthorrefmark{1}, Wilton Kim\IEEEauthorrefmark{1}, Shubhransh Singhvi\IEEEauthorrefmark{2}, Han Mao Kiah\IEEEauthorrefmark{1}, San Ling\IEEEauthorrefmark{1}, Huaxiong Wang\IEEEauthorrefmark{1}}
	
 \IEEEauthorblockA{\small \IEEEauthorrefmark{1} School of Physical and Mathematical Sciences, 
		Nanyang Technological University, Singapore
    }
 \IEEEauthorblockA{\small \IEEEauthorrefmark{2} Signal Processing \& Communications Research Center, International Institute of Information Technology, Hyderabad, India
    }
  {\small \{stanislav.kruglik, gaojun.luo, wilt0002\}@ntu.edu.sg, shubhransh.singhvi@students.iiit.ac.in, \{hmkiah, lingsan, hxwang\}@ntu.edu.sg}}
\begin{document}
\date{}

\maketitle

\hspace*{-12pt}
\begin{abstract}
We consider the repair scheme of Guruswami-Wootters for the Reed-Solomon code and ask: can we correctly repair a failed node in the presence of erroneous nodes? Equivalently, we consider the collection of downloaded traces as a code and investigate its code-distance properties. We propose three lower bounds on its minimum distance and study methods to efficiently correct errors close to these bounds.  
\end{abstract}




\section{Introduction}\label{sec:intro}

Distributed storage becomes more popular as data volume increases exponentially in all industries. The data can be represented as $x\in\FF^k$ for some finite field $\FF$. To protect against erasures, the data is encoded into $c = (c_1,\ldots,c_n)$ and each $c_i$ is kept in the server $i$.  One important performance metric of distributed storage is the total amount of information to download to perform the recovery, called the {\em repair bandwidth}, which was introduced in \cite{Dimakis2010}. Reed-Solomon code \cite{originalRSpaper, Dinh2022} is widely used as it allows one to recover $x$ by utilizing $k$ available servers. However, this approach is not optimal for the case of repairing a single erasure as we need to download $k$ codesymbols to recover one codesymbol.

Many studies have been conducted to improve the repair bandwidth \cite{Balaji2022,Liu2018}. 
The pioneering work of Guruswami and Wootters \cite{guruswamiwooters2017} revisits the widely used Reed-Solomon codes and shows that it is possible to decrease the repair bandwidth of single erasure dramatically when more than $k$ nodes are available. 
Roughly speaking, in this scheme, instead of downloading $k$ symbols in $\FF$, 
we download $(n-1)$ sub-symbols in a base field $\BB$ called {\em traces}.
Then using certain parity-check equations (see Section~\ref{sec::tracerepairframework} for details),
we then recover the failed node.

Later, the Guruswami-Wooters repair scheme, or the trace repair framework, was extended to different scenarios in a number of works \cite{Duursma2017, Dau2017, Kiah2018, Kiah2021, ShuttyWootters2022, Tamo2022i, Tamo2022ii, kiah2023}.
All of these studies, however, assume that all available nodes give correct information. 

In this paper, we consider the case where nodes can provide wrong information and we attempt to answer the following question:  is it possible to correctly repair a node with low bandwidth in the presence of erroneous nodes? 
Previously, the problem of erroneous trace correction in Reed-Solomon repair was solved for the case of the high sub-packetization regime \cite{chenbarg2020}. 
Our approach, on the other hand, is applicable to any sub-packetization level. 
Furthermore, applications extend beyond coding for distributed storage. 
In the case of secret sharing schemes based on Reed-Solomon codes (i.e., Shamir's secret sharing scheme \cite{shamir}), our methods allow shareholders to successfully obtain a secret in the presence of malicious shareholders.

\subsection{Motivating Example}

Let $\FF=GF(16)$. Set $n=16$ and $k = 2$, and we consider a $[16,2,15]-$RS code. 
So, we can correct five errors and four erasures. 
Hence, in the classical approach, when there are at most \textcolor{black}{five erroneous nodes}, 
we download $(16-4)=12$~symbols from any twelve available nodes to repair a failed node.
In other words, the repair bandwidth in this case is $12(4)=48$ bits. 

On the other hand, we consider $\FF$ as an extension field of $\BB=GF(4)$.
Then Guruswami-Wooters repair scheme~\cite{guruswamiwooters2017} allow us to repair a failed node by downloading 15 traces (see Theorem~\ref{thm:GW}).
Later, we show that the traces \textcolor{black}{form a $\BB$-linear code} with minimum distance 11 (see Theorem~\ref{thm:main}). 
Therefore, using these 15 traces, we are able to correct five errors. 
Here, our repair bandwidth is $15(2)=30$ bits only.

\subsection{Our Contributions}
In the spirit of Guruswami and Wootters, our primary objective is to simply understand what can be done for Reed-Solomon codes. Specifically, we focus on the \textcolor{black}{Guruswami-Wootters} repair scheme (which we review in Section~\ref{sec::tracerepairframework}) and ask: can we correctly repair a failed node in the presence of erroneous nodes? 
Equivalently, we consider the collection of traces as a {\em $\BB$-linear code $\cT$} and ask what is the minimum distance of this code. 
In Section~\ref{sec:lowerbounds}, we first show this code is in fact a subcode of a generalized Reed-Solomon code. Hence, we are able to apply efficent decoding algorithms like Berlekamp-Welch to correct errors. This gives us a lower bound when $k$ is small. 
For larger values of $k$, we construct additional parity check equations in $\FF$ and use {\em lifted decoding}  to correct errors (see Section~\ref{sec:lifted-decoding}).
Finally, we use the {\em character sums} to provide another lower bound. 
We remark that similar techniques were used in \cite{roth2006, li14, ding15, tang16, luo21, luo22}, \textcolor{black}{but most of these works focus on polynomial trace codes, while we consider more general rational trace codes}.
To efficiently correct errors close to these bounds, we modify the famous Guruswami-Sudan list-decoding algorithm in Section~\ref{GS}. Finally, in Section~\ref{NR}, we compare the various bounds obtained in the paper.

\vspace{2mm}
\section{Preliminaries}
\vspace{2mm}
Let $[n]$ denote the set of integers $\{1,\ldots,n\}$. 
Let $\BB$ be the finite field with $p^{m}$ elements 
and $\FF$ be the extension of degree $t\ge 1$. 
So, $|\FF|=p^{mt}$ and $|\BB|=p^m$.
We refer to the elements of $\FF$ as \textit{symbols} and to the elements of $\BB$ as \textit{sub-symbols}.
We use $\FF[x]$ to denote a ring of polynomials over finite field $\FF$. 

An $\FF$-linear $[n, k]$ code $\cC$ is $k$-dimensional subspace of $\FF^n$. 
We denote the dual of code $\cC$ by $\cC^{\perp}$ and 
so, for each $\vc=(c_1,\ldots,c_n)\in\cC$ and $\vc^{\perp}=(c_1^{\perp},\ldots,c_n^{\perp})\in\cC^{\perp}$, it holds that $\sum_{i=1}^nc_ic_i^{\perp}=0$.
We denote the minimum distance of $\cC$ with $d(\cC)$ and the \textcolor{black}{Singleton bound} states that $d(\cC)\leq n-k+1$ (see for example~\cite{roth2006}). 
Codes that \textcolor{black}{attain} this bound are called maximum-distance separable (MDS) codes and 
in this work, we focus on the following class of MDS codes.

\begin{definition}\label{RS}
Let $\cA\subseteq\FF$. 
The {\em Reed-Solomon} code $\rs(\cA,k)$ of dimension $k$ with evaluation points $\cA$ is defined as:
\begin{equation*}
\rs(\cA,k) \triangleq \{(f(\alpha))_{\alpha\in\cA}: f\in\FF[x], \deg(f(X))\le k-1\}\,,    
\end{equation*}
while the {\em generalized Reed-Solomon} code $\rs(\cA,k)$ of dimension $k$ with evaluation points $\cA\subseteq\FF$ and multiplier vector $\vlambda\in\FF^n\setminus\{\mathbf{0}\}$ is defined as:  
{\small
\begin{equation*}
\grs(\cA,k,\vlambda) \triangleq \{(\lambda_\alpha f(\alpha))_{\alpha\in\cA}: f\in\FF[x], \deg(f(X))\le k-1\}\,.
\end{equation*}
}
\end{definition}

Clearly, the generalized Reed-Solomon code with multiplier vector $\vlambda=(1,\ldots,1)$ is \textcolor{black}{a Reed-Solomon code} of the same length and dimension. It is well known (see~\cite{roth2006}) that dual of $\rs(\cA,k)$ is $\grs(\cA,|\cA|-k,\vlambda)$ for $\vlambda=(\lambda_{\alpha})_{\alpha\in \cA}$ where
\begin{equation}
    \lambda_j=\frac{1}{\prod_{\alpha_i\in\cA\setminus\{\alpha_j\}}(\alpha_j-\alpha_i)}.
\end{equation}
Note that when $\cA=\FF$, we have $\lambda_{\alpha}=1$ for all $\alpha\in\cA$. If it is clear from the context, we use $f(x)$ to denote the polynomial of degree at most $k-1$ corresponding to $\rs(\cA,k)$ and $r(x)$ to denote the polynomial of degree at most $|\cA|-k-1$ corresponding to the dual codeword in $\cC^{\perp}$.

\subsection{Trace Repair Framework}\label{sec::tracerepairframework}
In this section, we discuss about trace repair framework to recover a single erased node. The main idea of trace repair framework is that we want to recover a symbol in $\FF$ by using sub-symbols in $\BB$. Without loss of generality, let us assume that $f(0)$ is erased. Let $\cA\subseteq \FF\setminus\{0\}$ be the set of evaluation points.

We consider trace function $\tr:\FF\to\BB$ defined as
\begin{align}
    \tr(x) = \sum_{i=0}^{t-1} {x^{|\BB|}}^i,\quad\text{for all }x\in\FF.
\end{align}
Clearly, $\tr(x)$ is a polynomial in $x$ with degree $p^{mt-m}$. Next, we discuss how this trace function helps us in the recovery. 
We regard $\FF$ as a $\BB$-linear vector space of dimension $t$ and 
let $\{u_1,\ldots,u_t\}$ be a basis of $\FF$ over $\BB$. 
Furthermore, there exists a  {\em trace-dual basis} $\{\widetilde{u}_1,\ldots,\widetilde{u}_t\}$ for $\mathbb{F}$ such that $\tr(u_i\widetilde{u}_j)=1$ if $i=j$, and $\tr(u_i\widetilde{u}_j)=0$, otherwise. The following result plays a crucial role in our evaluation framework.

\begin{proposition}[{\cite[Ch. 2]{Lidl1996}}]\label{prop:trace}
	Let $\{u_1,\ldots,u_t\}$ be a $\BB$-basis of $\mathbb{F}$. Then there exists a trace-dual basis  $\{\widetilde{u}_1,\ldots,\widetilde{u}_t\}$ and we can write each element $x\in\mathbb{F}$ as
	\begin{equation*}
		x=\sum_{i=1}^t\tr(u_ix)\widetilde{u}_i.
	\end{equation*}
\end{proposition}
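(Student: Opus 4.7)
The plan is to establish the proposition via the non-degeneracy of the trace bilinear form and then exhibit the dual basis by standard linear algebra.

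First, I would show that $B(a,b) = \tr(ab)$ defines a non-degenerate $\BB$-bilinear form on $\FF$, viewed as a $\BB$-vector space of dimension $t$. Bilinearity is immediate from the $\BB$-linearity of $\tr$. For non-degeneracy, the claim is that for every $a \in \FF \setminus \{0\}$ there exists $b \in \FF$ with $\tr(ab) \neq 0$. The key input is that $\tr$ is not identically zero on $\FF$: as a polynomial, $\tr(x) = x + x^{|\BB|} + \cdots + x^{|\BB|^{t-1}}$ has degree $|\BB|^{t-1} < |\FF|$, so it has at most $|\BB|^{t-1}$ roots and therefore cannot vanish on all of $\FF$. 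Given $a \neq 0$, multiplication by $a$ is a $\BB$-linear bijection of $\FF$, so $a\FF = \FF$, and hence $\{\tr(ab) : b \in \FF\} = \tr(\FF) \neq \{0\}$.

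Next, I would invoke the standard fact that a non-degenerate bilinear form on a finite-dimensional space admits a dual basis for any given basis. Concretely, let $M$ be the $t \times t$ matrix over $\BB$ with entries $M_{ij} = \tr(u_i u_j)$. Non-degeneracy of $B$ is equivalent to $M$ being invertible. Writing each candidate $\widetilde{u}_j = \sum_k a_{kj} u_k$ and imposing $\tr(u_i \widetilde{u}_j) = \delta_{ij}$ yields the linear system $M A = I$, whose unique solution $A = M^{-1}$ determines the $\widetilde{u}_j$; invertibility of $M$ also forces the $\widetilde{u}_j$ to be $\BB$-linearly independent, hence a basis.

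Finally, any $x \in \FF$ expands uniquely as $x = \sum_i c_i \widetilde{u}_i$ with $c_i \in \BB$. Applying $\tr(u_j \, \cdot\,)$ to both sides and using $\tr(u_j \widetilde{u}_i) = \delta_{ji}$ yields $\tr(u_j x) = c_j$, which is the claimed expansion. The main obstacle is really only the non-degeneracy step, and within that the substantive point is showing that $\tr$ is not the zero map on $\FF$; once non-degeneracy is established, both the existence of the dual basis and the expansion formula follow from formal linear algebra with no further coding-theoretic input.
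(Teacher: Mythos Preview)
Your argument is correct and is precisely the standard proof: non-degeneracy of the trace form via the degree bound on $\tr(x)$, invertibility of the Gram matrix $M_{ij}=\tr(u_iu_j)$ to produce the dual basis, and then reading off the coefficients by applying $\tr(u_j\,\cdot\,)$. The paper does not supply its own proof of this proposition at all---it simply cites \cite[Ch.~2]{Lidl1996}---so there is nothing to compare; your write-up is essentially what one finds in that reference.
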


This means, in order to recover $f(0)$, we need to determine $\tr(u_i\lambda_0f(0))$ for all $i=1,\ldots,t$ by downloading certain information from the remaining nodes. To do so, we consider
\begin{equation}
    p_i(x) = \frac{\tr(u_ix)}{x},\quad\text{for all }i=1,\ldots,t.
\end{equation}
We can check that $p_i$ is a polynomial of degree $p^{mt-m}-1$ and $p_i(0) = u_i$. If $p^{mt-m}-1 \le |\cA| - k$, then the following parity check equations hold
\begin{align}\label{eq:paritycheckGW}
    u_i\lambda_0 f(0) = \sum_{\alpha\in\cA} p_i(\alpha)\lambda_{\alpha}f(\alpha).
\end{align}
Applying trace function both sides of \eqref{eq:paritycheckGW}, we obtain
\begin{align}
    \tr(u_i\lambda_0 f(0)) &= \sum_{\alpha\in\cA} \tr(p_i(\alpha)\lambda_{\alpha}f(\alpha))\notag\\
    &= \sum_{\alpha\in\cA} \tr(u_i\alpha)\tr\left(\frac{\lambda_\alpha f(\alpha)}{\alpha}\right). \label{eq:trace-repair}
\end{align}

Therefore, it suffices to download $\tr(\lambda_\alpha f(\alpha)/\alpha)$ from node $\alpha$.
This motivates us to study the following code.

\begin{definition}
    The {\em repair-trace} code with evaluation points $\cA\subseteq \FF\setminus\{0\}$ is defined as:
    \begin{align}
        \cT(\cA,k) & \triangleq \{(\tr(\lambda_\alpha f(\alpha)/\alpha))_{\alpha\in\cA} :f\in\FF[x], \notag \\ 
        & \hspace{30mm} \deg(f(X))\le k-1\}\,.
    \end{align}
\end{definition}

\begin{remark}
It is possible that $|\cT(\cA,k)|< |\FF|^{k}$. That is, in the definition of $\cT$, it is possible for two distinct polynomials $f$ and $g$ (with degrees at most $k-1$) to correspond to the same codeword. In other words, $\tr(\lambda_\alpha f(\alpha)/\alpha)=\tr(\lambda_\alpha g(\alpha)/\alpha)$ for $\alpha\in \cA$.
Nevertheless, $\cT(\cA,k)$ is a $\BB$-linear code.
\end{remark}

The above technique is summarised into the theorem below.

\begin{theorem}[{\cite[\textcolor{black}{Guruswami-Wootters}]{guruswamiwooters2017}}]\label{thm:GW}
    If $|\cA|\ge p^{mt-m}-1+k$ and given $\vc \in \cT(\cA,k)$, we can efficiently compute $f(0)$.
\end{theorem}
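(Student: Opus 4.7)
The plan is to convert the identities derived in Section~\ref{sec::tracerepairframework} into an explicit, efficient decoding procedure. The central observation is that in~\eqref{eq:trace-repair}, every quantity on the right-hand side is known to the decoder: the coefficient $\tr(u_i\alpha)$ depends only on the fixed basis $\{u_1,\ldots,u_t\}$ and the evaluation point $\alpha$, while $\tr(\lambda_\alpha f(\alpha)/\alpha)$ is precisely the coordinate $c_\alpha$ of the received codeword $\vc$. Hence for each $i\in[t]$ the decoder can form $\tau_i \triangleq \sum_{\alpha\in\cA}\tr(u_i\alpha)\,c_\alpha$, which equals $\tr(u_i\lambda_0 f(0))$ provided the parity-check identity~\eqref{eq:paritycheckGW} holds.

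First I would justify~\eqref{eq:paritycheckGW}. Since $\deg p_i = p^{mt-m}-1$ and $\deg f\leq k-1$, the polynomial $p_i(x)f(x)$ has degree at most $p^{mt-m}+k-2$, and the hypothesis $|\cA|\geq p^{mt-m}-1+k$ makes this at most $|\cA|-1$. Consequently, on the augmented evaluation set $\cA\cup\{0\}$, the scaled vector $(\lambda'_\beta p_i(\beta))_\beta$ lies in the dual $\grs(\cA\cup\{0\},|\cA|+1-k,\vlambda')$ of $\rs(\cA\cup\{0\},k)$. The vanishing dual pairing with $(f(\beta))_\beta$, together with $p_i(0)=u_i$, yields~\eqref{eq:paritycheckGW}; applying $\tr$ and using its $\BB$-linearity (in particular $\tr(bz)=b\,\tr(z)$ for $b\in\BB$, $z\in\FF$) then produces~\eqref{eq:trace-repair}.

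With all $\tau_i$ available, I would invoke Proposition~\ref{prop:trace} with $x=\lambda_0 f(0)$ to write $\lambda_0 f(0)=\sum_{i=1}^{t}\tau_i\,\widetilde{u}_i$, and then divide by the nonzero multiplier $\lambda_0$ to recover $f(0)$. Efficiency is immediate: computing $\{\tau_i\}$ costs $O(t|\cA|)$ operations over $\BB$, the basis expansion is $O(t)$ operations over $\FF$, and one $\FF$-division finishes the job. There is no real obstacle beyond careful bookkeeping of the multiplier vector $\vlambda'$ for the augmented set $\cA\cup\{0\}$ relative to the $\lambda_\alpha$ appearing in the statement, together with the degree check that places $p_i f$ inside the appropriate Reed-Solomon code.
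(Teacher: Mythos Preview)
Your proposal is correct and follows precisely the approach the paper itself takes: the paper does not give a separate proof of Theorem~\ref{thm:GW}, but simply summarizes the derivation of~\eqref{eq:paritycheckGW}--\eqref{eq:trace-repair} and the reconstruction via Proposition~\ref{prop:trace} into the theorem statement. Your write-up just makes the degree check for the dual-code membership of $p_i$ and the final inversion by $\lambda_0$ more explicit than the paper does.
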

Our main task is to determine the minimum distance of $\cT(\cA,k)$. If the distance is $d$, then we are able to correct $\floor{(d-1)/2}$ errors. 
In addition, we also investigate algorithms that are able to correct the number of errors efficiently.

\subsection{Main Results}

In this conference paper, to simplify our exposition, we focus on the case where the data is stored on a full-length Reed-Solomon code \textcolor{black}{of length $n=p^{mt}$, dimension $k$ and code rate $R$}. Hence, $\lambda_\alpha = 1$ for all $\alpha$.
Then the repair-trace code is simply $\cT(\cA,k) = \{(
\tr(f(\alpha)/\alpha))_{\alpha\in\cA} : \deg(f(X))\le k-1\}$ and we summarize our results in  the following theorem.

\begin{theorem}\label{thm:main}
Consider the full-length Reed-Solomon code and let $0$ be the failed node $f(0)$. 
Let $d$ be the minimum distance of corresponding repair-trace code $\cT(\cA,k)$ \textcolor{black}{with $\cA=\FF\setminus\{0\}$}. 
The following bounds on $d$ hold:
\begin{enumerate}[(i)]
    \item (Degree Bound). \textcolor{black}{If $k\leq p^m$, then $d\ge p^{mt}-1-\Delta\triangleq d_1$, where}
\begin{equation}\label{eq:Delta}
    \Delta \triangleq 
    \begin{cases}
        (k-1)p^{mt-m}, &\text{ when }k\ge 2,\\
        p^{mt-m}-1, &\text{ when }k= 1.
    \end{cases}
\end{equation}
    \item (Lifted Decoding). \textcolor{black}{If $k\leq p^{mt}-p^{mt-m}$, then $d\ge \lfloor\frac{p^{mt}-k}{p^{mt-m}}\rfloor \triangleq d_2 $.}

    \item (Character Sum Bound). \textcolor{black}{If $k<1+\frac{p^{mt}-1} {\sqrt{p^{mt}}}$,then $d\ge d_3$, where}
    \begin{equation}\label{eq:d3}
    \footnotesize
        d_3 \triangleq 
        \begin{cases}
            \frac{p^m-p}{p^m}\left(p^{mt}-1-(k-1)\sqrt{p^{mt}}\right), &\text{ when }m\ge 2,\\
            \frac{p-1}{p}\left(p^{t}-1-(k-1)\sqrt{p^{t}}\right), &\text{ when }m = 1.
        \end{cases}
    \end{equation}
\end{enumerate}
\end{theorem}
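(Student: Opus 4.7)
For part (i), my plan is to linearise the trace. Expanding $\tr(y)=\sum_{i=0}^{t-1}y^{p^{mi}}$ and clearing the denominator by $\alpha^{p^{mt-m}}$ yields the polynomial
\begin{equation*}
h_f(x) \triangleq \sum_{i=0}^{t-1} x^{p^{mt-m}-p^{mi}}\, f(x)^{p^{mi}}
\end{equation*}
of degree at most $\Delta$ --- for $k\geq 2$ dominated by $f(x)^{p^{mt-m}}$ (from $i=t-1$) and for $k=1$ by $x^{p^{mt-m}-1}$ (from $i=0$). Because multiplication of each coordinate by the nonzero scalar $\alpha^{p^{mt-m}}$ preserves Hamming weight, $\cT(\cA,k)$ embeds as a $\BB$-subcode of $\rs(\cA,\Delta+1)$, which is $\FF$-linearly MDS of minimum distance $|\cA|-\Delta=d_1$. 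The hypothesis $k\leq p^m$ just ensures $\Delta+1\leq|\cA|$ so that the target RS code is well-defined.

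For part (ii), my plan is to exhibit a rich $\BB$-linear parity-check system for $\cT(\cA,k)$ by descending $\FF$-parity checks of the ambient $\FF$-linear MDS code $\cC_L\triangleq\{(f(\alpha)/\alpha)_{\alpha\in\cA}:\deg f\leq k-1\}$ (minimum distance $p^{mt}-k$) to $\BB$-valued ones via the trace. Concretely, for each exponent $\ell\in\{2,\ldots,p^{mt}-k\}$ whose entire $p^m$-cyclotomic orbit in $\mathbb{Z}/(p^{mt}-1)$ lies in that same range, the vector $(\tr(\alpha^\ell))_{\alpha\in\cA}$ is a $\BB$-valued element of $\cC_L^\perp$, hence a valid parity check of $\cT(\cA,k)$; together with the Guruswami-Wootters identity~\eqref{eq:paritycheckGW} varied over $u\in\FF$, this produces a substantial parity-check window. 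A BCH/Roos-type distance argument on this window, combined with the lifted-decoding bookkeeping that treats a single trace coordinate as packaging $p^{mt-m}$ $\FF$-coordinates, then yields $d\geq d_2$. The main obstacle here is the precise cyclotomic-coset arithmetic: I must verify that sufficiently many Frobenius orbits fit, and that they collectively generate a $\BB$-subcode whose BCH bound is exactly $\lfloor(p^{mt}-k)/p^{mt-m}\rfloor$.

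For part (iii), my plan uses an additive-character expansion combined with Weil's bound. Pick a nontrivial additive character $\psi_\BB$ of $\BB$ and write $\mathbbm{1}[c_\alpha=0]=p^{-m}\sum_{\beta\in\BB}\psi_\BB(\beta c_\alpha)$; summing over $\alpha\in\cA$ and isolating the $\beta=0$ contribution yields
\begin{equation*}
|\cA|-w=\frac{|\cA|}{p^m}+\frac{1}{p^m}\sum_{\beta\in\BB\setminus\{0\}} S_\beta, \quad S_\beta\triangleq\sum_{\alpha\in\cA}\Psi\!\left(\tfrac{\beta f(\alpha)}{\alpha}\right),
\end{equation*}
with $\Psi\triangleq\psi_\BB\circ\tr$ a nontrivial additive character of $\FF$. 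For $\beta$ outside a small Artin-Schreier degeneracy set --- of size $p$ when $m\ge 2$ and just $\{0\}$ when $m=1$ --- the rational function $\beta f(x)/x$ is not of the form $g(x)^p-g(x)+c$, so Weil's bound for rational-function character sums applies and yields $|S_\beta|\leq (k-1)\sqrt{p^{mt}}$; the crude bound $|S_\beta|\leq|\cA|$ suffices for the at most $p-1$ degenerate nonzero $\beta$'s. Substituting and rearranging gives $w\geq d_3$, with the hypothesis $k<1+(p^{mt}-1)/\sqrt{p^{mt}}$ ensuring the main term dominates. The delicate step is identifying the Artin-Schreier-degenerate set and handling the $m=1$ vs. $m\geq 2$ dichotomy, which is what accounts for the two cases in~\eqref{eq:d3}.
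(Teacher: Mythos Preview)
Your argument for part~(i) is correct and is exactly the paper's approach: multiply through by $\alpha^{p^{mt-m}}$, expand the trace, and observe the resulting polynomial has degree at most $\Delta$, so $\cT(\cA,k)$ embeds in a GRS code of distance $|\cA|-\Delta$.

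For part~(ii), however, your plan diverges from the paper and has a real gap. You propose to find $\BB$-valued parity checks $(\tr(\alpha^\ell))_{\alpha}$ by restricting to those $\ell$ whose $p^m$-cyclotomic orbit stays in $\{2,\ldots,p^{mt}-k\}$, and then to extract a BCH/Roos bound from these together with some ``lifted-decoding bookkeeping.'' The difficulty is that the checks $(\tr(\alpha^\ell))_\alpha$ do not carry the Vandermonde structure needed for a BCH-type argument over $\BB$: each is a \emph{sum} of monomial rows $\alpha^{\ell p^{mi}}$, and there is no obvious way to recover $d_2$ consecutive independent columns from them. You yourself flag the cyclotomic bookkeeping as unresolved. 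The paper sidesteps this entirely by producing \emph{$\FF$-linear} parity checks $\sum_{\alpha}\alpha^\ell c_\alpha=0$ for consecutive $\ell=2,\ldots,L=\lfloor(p^{mt}-k)/p^{mt-m}\rfloor$. The trick is to start from the dual polynomial $r_i^{(\ell)}(x)=\tr(u_ix^\ell)/x$ (degree $\le \ell p^{mt-m}-1$, and vanishing at $0$ when $\ell\ge 2$), deduce $\sum_\alpha \tr(u_i\alpha^\ell)c_\alpha=0$, and then---since $c_\alpha\in\BB$---pull the trace outside to get $\tr\!\bigl(u_i\sum_\alpha\alpha^\ell c_\alpha\bigr)=0$ for every basis element $u_i$, forcing $\sum_\alpha\alpha^\ell c_\alpha=0$ in $\FF$. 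Consecutive $\FF$-checks for $\ell=2,\ldots,L$ then place $\cT(\cA,k)$ inside a GRS code of distance $L=d_2$. This is both simpler and complete; your route, even if it could be made to work, would need substantial additional argument.

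For part~(iii) your strategy matches the paper's, but there is a small omission in the $m=1$ case. You assert that the Artin--Schreier degeneracy set is ``just $\{0\}$ when $m=1$,'' but that is false in general: when $\BB=\FF_p$, if \emph{one} nonzero $\beta$ makes $\beta f(x)/x$ Artin--Schreier then \emph{every} nonzero $\beta\in\FF_p$ does (scaling by $r\in\FF_p^*$ preserves the form since $r(g^p-g)=(rg)^p-(rg)$). In that fully degenerate situation Weil applies to no nonzero $\beta$ and your inequality chain collapses. The paper handles this as a separate case: one checks directly that $c_\alpha=\tr(f(\alpha)/\alpha)=\tr(b)$ is constant in $\alpha$, so $\vc$ is either zero or has full weight $|\cA|\ge d_3$. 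With that case added, your argument for $m=1$ goes through; your $m\ge2$ analysis is essentially the paper's.
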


We can efficiently correct up to the distances promised by Theorem~\ref{thm:main}(i) and (ii).  
For Theorem~\ref{thm:main}(iii), we modify the famous Guruswami-Sudan algorithm to correct errors close to the character sum bound. \textcolor{black}{We do note that results from Theorem~\ref{thm:main}(i) and (ii) can be generalized for non-full-length Reed-Solomon codes. }

\section{Lower Bounds for Minimum Distance}\label{sec:lowerbounds}

In this section, we prove Theorem~\ref{thm:main}.
Recall that $\cA$ is a set of nonzero elements in $\FF$. First, we consider the code $\cT_1 \triangleq \left\{\left(\alpha^{p^{mt-m}} c_{\alpha}\right)_{\alpha\in\cA} : \vc \in \cT(A,k)\right\}$
and show that it is a subcode of some generalized Reed-Solomon code.

\begin{proposition}\label{prop:grs-1}
Let $\Delta$ be defined in~\eqref{eq:Delta}. If $\Delta<|\cA|$, then
$\cT_1\subseteq\grs(\cA,\Delta+1,\vmu_1)$ for some $\vmu_1$ 
\end{proposition}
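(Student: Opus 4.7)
The plan is to express each coordinate of a codeword in $\cT_1$ as the evaluation at $\alpha$ of an explicit polynomial $g(x)\in\FF[x]$, and then bound $\deg g$ to show $g$ has degree at most $\Delta$. Since the multiplier vector in a generalized Reed--Solomon code can be taken to be $\vmu_1=\mathbf{1}$, this will place $\cT_1$ inside $\rs(\cA,\Delta+1)=\grs(\cA,\Delta+1,\mathbf{1})$.

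First, I would unpack the trace. For $\alpha\in\cA\subseteq\FF\setminus\{0\}$ and $f\in\FF[x]$ with $\deg f\le k-1$,
\begin{equation*}
\alpha^{p^{mt-m}}\tr(f(\alpha)/\alpha)
=\sum_{i=0}^{t-1}\alpha^{p^{mt-m}-p^{mi}}\,f(\alpha)^{p^{mi}}.
\end{equation*}
Because every exponent $p^{mt-m}-p^{mi}$ is nonnegative, the right-hand side is the value at $\alpha$ of the polynomial
\begin{equation*}
g(x)\triangleq\sum_{i=0}^{t-1}x^{p^{mt-m}-p^{mi}}\,f(x)^{p^{mi}}\in\FF[x].
\end{equation*}
Thus every word of $\cT_1$ arises as $(g(\alpha))_{\alpha\in\cA}$ for some such $g$, and the claim reduces to bounding $\deg g$ by $\Delta$.

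Next I would compute the degree of the $i$-th summand, namely $(p^{mt-m}-p^{mi})+(k-1)p^{mi}=p^{mt-m}+(k-2)p^{mi}$. For $k\ge 2$ the coefficient $(k-2)$ is nonnegative, so the bound is maximized at $i=t-1$, giving $\deg g\le(k-1)p^{mt-m}$. For $k=1$ the polynomial $f$ is a constant and the summand reduces to $c_0^{p^{mi}}x^{p^{mt-m}-p^{mi}}$, whose degree is maximized at $i=0$, giving $\deg g\le p^{mt-m}-1$. In both cases $\deg g\le\Delta$ as defined in~\eqref{eq:Delta}.

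Under the hypothesis $\Delta<|\cA|$, the code $\rs(\cA,\Delta+1)$ is well-defined and we conclude $\cT_1\subseteq\rs(\cA,\Delta+1)=\grs(\cA,\Delta+1,\vmu_1)$ with $\vmu_1=\mathbf{1}$. The argument is essentially a degree count; the only subtlety I anticipate is handling the edge case $k=1$ separately (since then $(k-2)p^{mi}$ is decreasing in $i$ and the extremal index flips from $i=t-1$ to $i=0$), which is exactly why~\eqref{eq:Delta} splits into two cases.
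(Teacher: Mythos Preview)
Your proof is correct and follows essentially the same approach as the paper: both multiply by $\alpha^{p^{mt-m}}$ to clear denominators, expand the trace as $\sum_{i=0}^{t-1}x^{p^{mt-m}-p^{mi}}f(x)^{p^{mi}}$, and bound the degree of each summand by $p^{mt-m}+(k-2)p^{mi}$. Your treatment is actually a bit more explicit than the paper's, since you spell out the $k=1$ versus $k\ge 2$ case split (explaining why~\eqref{eq:Delta} bifurcates) and observe that one may take $\vmu_1=\mathbf{1}$.
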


\begin{proof}
We note that $\vc^{*}\in\cT_1$ can be represented as 
\begin{align*}
(F(\alpha))_{\alpha\in\cA}&=\left(\alpha^{p^{mt-m}}\tr(\frac{f(\alpha)}{\alpha})\right)_{\alpha\in\cA}\\
&=\left(\alpha^{p^{mt-m}}\sum_{i=0}^{t-1}\left(\frac{f(\alpha)}{\alpha}\right)^{p^{mi}}\right)_{\alpha\in\cA}\\
&=\left(\alpha^{p^{mt-m-1}}f(\alpha)+\cdots+f(\alpha)^{p^{mt-m}}\right)_{\alpha\in\cA},
\end{align*}
where $F(x)$ is a polynomial of degree $\max(k+p^{mt-m}-2,\ldots,(k-1)p^{mt-m})$. This fact finishes the proof. 
\end{proof}

Since $\cT_1$ is equivalent to the repair-trace code $\cT(\cA,k)$ (since $\alpha\ne0$ for all $\alpha\in\cA$), the minimum distance of $\cT(\cA,k)$ is at least $|\cA|-\Delta$ and we obtain Theorem~\ref{thm:main}(i). We note that we can efficiently correct up to the promised distance using any Reed-Solomon code bounded-distance decoder (see \cite{roth2006}).

\subsection{Lifted Decoding}\label{sec:lifted-decoding}
\textcolor{black}{Proposition~\ref{prop:grs-1} applies only when $k$ is at most $\sqrt{n}$. In this section, we study the other extreme and look for bounds that apply when $k$ is large. In fact, we demonstrate that $\cT(\cA,k)$ is a subcode of a generalized Reed-Solomon code with a different set of parameters and sub-constant minimum distance at least $p^{m}(1-R)$. To this end, we form the following set of parity-check equations similar to \eqref{eq:trace-repair}.}

\begin{lemma}\label{lem:parity}
	For $2\le \ell\le \floor{\frac{p^{mt}-k}{p^{mt-m}}}$, we have that
	\[\sum_{\alpha\in\cA} \alpha^\ell \tr\left(\frac{f(\alpha)}{\alpha}\right) = 0. \]
\end{lemma}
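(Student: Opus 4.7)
The plan is to expand the trace through its definition $\tr(x)=\sum_{i=0}^{t-1}x^{p^{mi}}$ and reduce the identity to a character-sum computation on $\FF^*=\cA$. Writing $f(x)=\sum_{j=0}^{k-1}f_j x^j$ and applying Frobenius in characteristic $p$,
\[
\sum_{\alpha\in\cA}\alpha^\ell\tr\!\left(\frac{f(\alpha)}{\alpha}\right) \;=\; \sum_{i=0}^{t-1}\sum_{j=0}^{k-1} f_j^{p^{mi}}\sum_{\alpha\in\FF^*}\alpha^{s_{ij}},\qquad s_{ij}:=\ell+(j-1)p^{mi}.
\]
Using the standard identity $\sum_{\alpha\in\FF^*}\alpha^s=-1$ if $(p^{mt}-1)\mid s$ and $0$ otherwise (which handles negative $s$ via $\alpha^{-1}=\alpha^{p^{mt}-2}$), the lemma reduces to showing $s_{ij}\not\equiv 0\pmod{p^{mt}-1}$ for every $i\in\{0,\ldots,t-1\}$ and $j\in\{0,\ldots,k-1\}$.

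To handle large and negative exponents uniformly, I would multiply $s_{ij}$ by $p^{m(t-i)}$, which is coprime to $p^{mt}-1$, so divisibility is preserved. Using $p^{mt}\equiv 1\pmod{p^{mt}-1}$,
\[
p^{m(t-i)}\,s_{ij}\;\equiv\;\ell\,p^{m(t-i)}+(j-1)\;=:\;T_{ij}\pmod{p^{mt}-1}.
\]
The point is that the ``big'' factor $(j-1)p^{mi}$, which makes a naive bound $s_{ij}<p^{mt}-1$ fail for large $k$, has been converted into the tiny integer $(j-1)$, while the hypothesis $\ell\le\floor{(p^{mt}-k)/p^{mt-m}}$ becomes the workable estimate $\ell\,p^{mt-m}\le p^{mt}-k$, which cleanly controls $\ell\,p^{m(t-i)}$ whenever $i\ge 1$.

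For $i\ge 1$ one checks that $0<T_{ij}<2(p^{mt}-1)$, so the only candidate multiple of $p^{mt}-1$ in range is $p^{mt}-1$ itself. But $T_{ij}=p^{mt}-1$ would force $j-1=p^{mt}-1-\ell\,p^{m(t-i)}\ge p^{mt}-1-(p^{mt}-k)=k-1$, contradicting $j\le k-1$. For $i=0$, we have $T_{ij}\equiv \ell+j-1\pmod{p^{mt}-1}$, and this reduced residue sits in $[1,p^{mt}-3]$ using $\ell\le p^m-1$ together with $k\le p^{mt}-p^m$ (the latter follows from the theorem's hypothesis $k\le p^{mt}-p^{mt-m}$, since $p^{mt-m}\ge p^m$ for $t\ge 2$). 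In either case $T_{ij}\not\equiv 0\pmod{p^{mt}-1}$, so each inner character sum vanishes and the full double sum collapses to zero.

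The main obstacle I anticipate is resisting the temptation to try a direct bound placing $s_{ij}$ in the interval $(0,p^{mt}-1)$: that argument succeeds only for modest $k$ and breaks once $(k-2)p^{m(t-1)}$ overtakes $p^{mt}-1$, which is exactly the regime this lemma is designed to cover. Multiplying by $p^{m(t-i)}$ to swap the roles of $\ell$ and $(j-1)$ in the exponent is the pivotal trick that makes the hypothesis on $\ell$ bite exactly.
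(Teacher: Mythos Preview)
Your argument is correct and constitutes a genuinely different proof from the paper's. The paper proceeds structurally: it builds explicit dual codewords $r_i^{(\ell)}(x)=\tr(u_ix^\ell)/x$ of the full-length RS code, checks the degree condition $\ell p^{mt-m}-1\le p^{mt}-k-1$, takes traces, and then reconstructs the desired sum via the trace-dual basis. You instead expand $\tr$ directly, reduce everything to the identity $\sum_{\alpha\in\FF^*}\alpha^s=0$ for $s\not\equiv 0\pmod{p^{mt}-1}$, and verify the non-divisibility exponent-by-exponent. The multiplication by $p^{m(t-i)}$ to swap the roles of $\ell$ and $j-1$ is a nice touch and exactly what makes the hypothesis on $\ell$ usable. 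Two minor remarks: for $i\ge 1$ your bound $0<T_{ij}<2(p^{mt}-1)$ is looser than needed --- in fact $T_{ij}\le \ell p^{mt-m}+(k-2)\le p^{mt}-2$, so the case $T_{ij}=p^{mt}-1$ never occurs; and for $i=0$ you need not invoke the theorem's hypothesis $k\le p^{mt}-p^{mt-m}$, since $\ell\le \ell p^{mt-m}\le p^{mt}-k$ already gives $\ell+j-1\le p^{mt}-2$ directly. As for what each route buys: the paper's dual-code argument is agnostic to the exact structure of $\cA$ beyond the degree constraint and so adapts more readily to non-full-length codes, whereas your character-sum approach is more elementary (no trace-dual basis needed) but leans essentially on $\cA=\FF^*$.
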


\begin{proof}
Let $\{u_1,\ldots,u_t\}$ be the basis of $\FF$ over $\BB$ and $\{\eta_1,\ldots,\eta_t\}$ be its trace-dual basis. 
We have the following codewords of the dual of $\rs(\cA\cup\{0\},k)$: 
\begin{equation}
    r_i^{(\ell)}(x)\triangleq \frac{\textrm{Tr}\left(u_ix^\ell\right)}{x},
\end{equation}
for all $i=1,\ldots,t$ and $\ell=2,\ldots,\lfloor\frac{p^{mt}-k}{p^{mt-m}}\rfloor$. 

It is clear that $r_i^{(\ell)}(0)=0$ and the polynomial $r_i{(\ell)}(x)$ is of degree at most $\ell p^{mt-m}-1\leq p^{mt}-k-1$ for all $i$ and $\ell$. 
Then we have the following parity-check equations for code $\rs(\cA\cup\{0\},k)$. 
\begin{equation*}
r_i^{(\ell)}(0)f(0)+\sum_{\alpha\in\cA}r_i^{(\ell)}(\alpha)f(\alpha)=0.    
\end{equation*}

Following the definition of $r_i^{(\ell)}(x)$, we have
\begin{equation}\label{eq:lifteddec1}
\sum_{\alpha\in\cA}f(\alpha)\frac{\tr(u_i\alpha^\ell)}{\alpha}=0.    
\end{equation}

Applying the trace function to both sides of~\eqref{eq:lifteddec1} and employing the fact that $\tr(a\tr(b))=\tr(b\tr(a))=\tr(a)\tr(b)$ we have
\begin{equation*}
\sum_{\alpha\in\cA}\tr\left(u_i\alpha^\ell\tr\left(\frac{f(\alpha)}{\alpha}\right)\right)=0.
\end{equation*}

Utilizing the linearity of trace function, we have
{\small 
\begin{equation*}
\tr\left(\sum_{\alpha\in\cA}u_i\alpha^l\tr\left(\frac{f(\alpha)}{\alpha}\right)\right)=\tr\left(u_i\sum_{\alpha\in\cA}\alpha^\ell\tr\left(\frac{f(\alpha}{\alpha}\right)\right)=0\,.
\end{equation*}
}
Consequently,
\begin{equation*}
\sum_{i=1}^t\eta_i\tr\left(u_i\sum_{\alpha\in\cA}\alpha^\ell\tr\left(\frac{f(\alpha)}{\alpha}\right)\right)
=\sum_{\alpha\in\cA}\alpha^\ell\tr\left(\frac{f(\alpha)}{\alpha}\right)=0.
\end{equation*}
\end{proof}

Then the following proposition is immediate from Lemma~\ref{lem:parity}.

\begin{proposition}\label{prop:grs-2}
	$\cT(\cA,k) \subseteq \grs(\cA,p^{mt}-\lfloor\frac{p^{mt}-k}{p^{mt-m}}\rfloor,\vmu_2)$ for some multiplier $\vmu_2$. 
\end{proposition}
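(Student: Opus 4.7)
My plan is to derive Proposition~\ref{prop:grs-2} as an immediate consequence of Lemma~\ref{lem:parity}, combined with the standard GRS duality recalled in Section~II. Writing $L\triangleq\lfloor(p^{mt}-k)/p^{mt-m}\rfloor$, Lemma~\ref{lem:parity} already hands me, for every $\vc\in\cT(\cA,k)$ and every $\ell\in\{2,\ldots,L\}$, the $\FF$-valued parity check $\sum_{\alpha\in\cA}\alpha^{\ell}c_{\alpha}=0$. After embedding $\cT(\cA,k)\subseteq\BB^{|\cA|}$ into $\FF^{|\cA|}$ via $\BB\subseteq\FF$, this says that the $L-1$ vectors $\vh_\ell\triangleq(\alpha^\ell)_{\alpha\in\cA}$ all annihilate the code over $\FF$.

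The crux of the argument is to identify the $\vh_\ell$ with a basis of a single GRS code. Because $0\notin\cA$, I would factor $\alpha^\ell=\alpha^{2}\cdot\alpha^{\ell-2}$ and choose the multiplier $\mu'_\alpha\triangleq\alpha^2$. Then as $\ell$ ranges over $\{2,\ldots,L\}$, the polynomials $x^{\ell-2}$ realise the standard monomial basis of $\FF[x]_{<L-1}$, so $\mathrm{span}_{\FF}\{\vh_2,\ldots,\vh_L\}$ is not merely contained in, but equals, $\grs(\cA,L-1,\vmu')$.

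The final step is to invoke the duality identity cited right after Definition~\ref{RS}: the dual of $\grs(\cA,L-1,\vmu')$ equals $\grs(\cA,|\cA|-(L-1),\vmu_2)$ for some multiplier $\vmu_2$. Since $|\cA|=p^{mt}-1$, this dual has dimension $p^{mt}-L$. Combining with the parity-check observation above yields $\cT(\cA,k)\subseteq\grs(\cA,p^{mt}-L,\vmu_2)$, which is exactly the claim of Proposition~\ref{prop:grs-2}.

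I do not foresee a genuine obstacle here, since essentially all of the work is absorbed into Lemma~\ref{lem:parity}. The one point requiring care is the middle step: making sure the $L-1$ parity checks span a GRS code of the stated dimension $L-1$ rather than some strict subcode. This reduces to linear independence of the monomials $1,x,\ldots,x^{L-2}$ as functions on $\cA$, which is automatic because a nonzero polynomial of degree below $L-1$ has fewer than $L-1$ roots in $\FF$, and $|\cA|=p^{mt}-1\ge L-1$ under the running hypothesis $k\le p^{mt}-p^{mt-m}$ of Theorem~\ref{thm:main}(ii). Everything else is routine parameter bookkeeping.
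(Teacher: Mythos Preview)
Your argument is correct and follows essentially the same route as the paper's proof: both derive the claim directly from Lemma~\ref{lem:parity} by recognizing that the rows $(\alpha^\ell)_{\alpha\in\cA}$, $\ell=2,\ldots,L$, generate a GRS code whose dual is the desired $\grs(\cA,p^{mt}-L,\vmu_2)$. Your write-up is in fact more careful than the paper's---you make explicit the multiplier $\vmu'=(\alpha^2)_\alpha$, the embedding of $\cT(\cA,k)$ into $\FF^{|\cA|}$, and the dimension check via linear independence of $1,x,\ldots,x^{L-2}$ on $\cA$---but the underlying strategy is identical.
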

\begin{proof}
From Lemma~\ref{lem:parity}, it is clear that parity-check matrix of code $\cT$ has the following rows
\begin{equation*}
  \vH = \begin{bmatrix}
    \alpha_1^2 & \alpha_2^2 & \cdots & \alpha_n^{2} \\
    \alpha_1^{3} & \alpha_2^3 & \cdots & \alpha_n^{3} \\
    \vdots & \vdots && \vdots \\
    \alpha_1^{\ell} & \alpha_2^{\ell} &\cdots & \alpha_n^{\ell} \\
  \end{bmatrix} \,.
\end{equation*}
for $\ell=\lfloor\frac{p^{mt}-k}{p^{mt-m}}\rfloor$ and $\cA=\{\alpha_1,\ldots,\alpha_n\}$.
It is clear that the dual of the code generated by $\vH$ is a ${\rm GRS}(\cA,|\cA|-\ell+1,\vmu_2)$ for some multiplier $\vmu_2$. Therefore $\cT(\cA,k)$ is a subcode of the latter and we obtain the proposition.
\end{proof}

\textcolor{black}{Therefore, every nonzero codeword in $\cT(\cA,k)$ has weight at least $\lfloor\frac{p^{mt}-k}{p^{mt-m}}\rfloor$ and statement of Theorem~\ref{thm:main}(ii) follows.}


\subsection{Character Sum Bound}\label{sec:character-sum}
In this subsection, we prove the Theorem~\ref{thm:main}(iii) by modifying the proof of \cite[Theorem~5.4]{roth2006} for two cases, $m=1$ and $m>1$. Before we proceed further, let us provide a short overview of character sums and refer the reader to \cite{Lidl1996, pascale2013} for more details. Assume that $\omega=e^{\frac{2i\pi}{p}}$ is a primitive $p$-th root of complex unity. 
It is well known that for any $x\in\BB$ it holds that 
\begin{equation}\label{eq:complexroot2}
\sum_{a\in\BB\setminus\{0\}}\omega^{ax} = 
\begin{cases}
  p-1 & \text{if $x=0$} \\
  -1 & \text{otherwise}.
\end{cases}  
\end{equation}

For any element $a$ from $\FF$, we can define an additive character as function $\chi_a(x)=\omega^{{\rm Abs}\tr(ax)}$, where $x\in\FF$ and ${\rm Abs}\tr(\cdot)$ is the trace function from $\FF$ to the finite filed with $p$ element. Character defined by $\chi_0(x)=x$ is called trivial, while all other characters are called non-trivial. The additive character $\chi_1(x)$ is said to be canonical. It is well known that all additive characters of $\FF$ form a group of order $p^m$ isomorhic to the additive group of $\FF$ and the following property holds
\begin{equation}
\chi_{a+b}(x)=\chi_a(x)\chi_b(x).
\end{equation}
The orthogonality relation of additive characters is given by
\begin{equation}\label{eq:complexroot}
\sum_{x\in \FF}\chi_a(x) =
\begin{cases}
0,& \mbox{if $a\neq 0$}\\
p^{mt},& \mbox{if $a=0$}
\end{cases}
\end{equation}

By the same way, for any element $a$ from multiplicative group of $\FF$ we can define a multiplicative character as a function $\Psi_a(g^k)=e^{\frac{2i\pi ak}{p^{mt}-1}}$, where $g$ is a fixed primitive element of $\FF$. Character defined by $\Psi_0(x)=1$ is called trivial, while all other characters are called non-trivial. It is well known that all multiplicative characters of $\FF$ form a group of order $p^{mt}-1$ isomorphic to the multiplicative group of $\FF$ and the following property holds
\begin{equation}
\Psi_{ab}(x)=\Psi_a(x)\Psi_b(x).
\end{equation}

Our further derivations rely on the upper bound for absolute value of the following non-degenerate sum
\begin{equation}\label{nondegeneratesum2}
S(\chi_a, \Psi_b; \phi, \varphi)=\sum_{x\in\FF\setminus\mathcal{S}}\chi_a(\phi(x))\Psi_b(\varphi(x)),
\end{equation}
where $\mathcal{S}$ denotes the set of poles of functions $\phi(x)\in\FF[x]$ and $\varphi(x)\in\FF[x]$. Non-degenerate property means that $a\phi(x)\ne h(x)^p-h(x)+c$ and $\varphi\ne ch(x)^{p^{mt}-1}$ for any $h(x)\in\FF[x]$ and $c\in\FF$. It is clear that $a\phi(x)= h(x)^p-h(x)+c$ and $\varphi(x)= ch(x)^{p^{mt}-1}$ imply that $\chi_a(\phi(x))$ and $\Psi_b(\varphi(x))$ are respective constant numbers for each $x\in\FF\setminus\mathcal{S}$. Essentially, we have the  following generalization of Weil estimate proved by Castro and Moreno to the case of rational functions $\phi(x)$ and $\varphi(x)$ in \cite{CaestroMoreno2000} in notations of \cite{Cochrane2006} and \cite{Booker2022}.
\begin{proposition}[{\cite[Lemma~2.1]{Booker2022}}]
Let $\phi(x)$, $\varphi(x)$ be rational functions in $\FF$, $\chi_a$ be non-trivial additive character on $\FF$ and $\Psi_b$ be non-trivial multiplicative character on $\FF$. Let $\mathcal{S}$ be the set of poles of functions $\phi$ and $\varphi$ in $\FF$. Further, let $l$ be the number of distinct zeros and non-infinite poles of $\phi$. Let
$l_1$ be the number of all poles of $\varphi$ and $l_0$ be the sum of their multiplicities. Let $l_2$ be the number of non-infinite poles of $\varphi$ which are zeros or poles of $\phi$.
Then
\begin{align}\label{weilestimate}
&|S(\chi_a, \Psi_b; \phi, \varphi)|=|\sum_{x\in\FF\setminus\mathcal{S}}\chi_a(\phi)\Psi_b(\varphi)|\notag\\
&\leq(l+l_0+l_1-l_2-2)\sqrt{p^{mt}}
\end{align}
\end{proposition}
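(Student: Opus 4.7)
The plan is to derive this mixed character sum bound from the Riemann hypothesis for curves over finite fields (Weil-Deligne), following the template of Castro and Moreno. The first step is to associate to the pair $(\phi,\varphi)$ a smooth projective curve $C$ over $\FF$ obtained as the fiber product of two coverings of $\mathbb{P}^1_{\FF}$: an Artin-Schreier cover defined by $y^p - y = {\rm Abs}\tr(a\phi(x))$, which encodes $\chi_a(\phi(x))$ at the level of points, and a Kummer cover defined by $z^{N} = \varphi(x)$, where $N$ is the order of $\Psi_b$, which encodes $\Psi_b(\varphi(x))$. The non-degeneracy hypotheses $a\phi \ne h^p - h + c$ and $\varphi \ne c h^{p^{mt}-1}$ are precisely the conditions forcing the two coverings, and hence their fiber product, to be geometrically irreducible, so that the standard form of Weil's theorem applies.

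The second step is to re-express $S(\chi_a,\Psi_b;\phi,\varphi)$ as a linear combination of $\FF$-point counts on $C$, modulo a bounded error supported on the finite ramification locus above $\cS$. By decomposing the regular representation of the Galois group of $C \to \mathbb{P}^1_{\FF}$, the character sum becomes a specific eigencomponent of $\#C(\FF) - (p^{mt}+1)$. Invoking Weil's bound $|\#C(\FF) - (p^{mt}+1)| \le 2 g_C \sqrt{p^{mt}}$ then reduces the problem to controlling the geometric genus $g_C$.

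The third and most delicate step is the genus computation via Riemann-Hurwitz on $C \to \mathbb{P}^1_{\FF}$. The Kummer component contributes ramification above the $l_1$ poles of $\varphi$, with total conductor governed by the sum of multiplicities $l_0$; the Artin-Schreier component contributes ramification above the $l$ distinct zeros and finite poles of $\phi$, together with the point at infinity. Points lying in both ramification loci must be counted once rather than twice, which is the source of the $-l_2$ correction. Assembling the local contributions yields $2 g_C - 2 \le l + l_0 + l_1 - l_2 - 2$, and substituting into Weil's inequality produces the claimed bound.

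The hard part is the wild ramification in the Artin-Schreier tower: the naive local conductor at a pole of $\phi$ depends on the pole order modulo $p$, and can be large. Here the non-degeneracy hypothesis $a\phi \ne h^p - h + c$ is essential, because it allows one to replace $\phi$ by an equivalent representative (modulo $h^p - h$) for which each finite pole contributes a controlled amount to the conductor, so that the simple combinatorial count by $l$ is valid. Tracking this reduction and verifying that the Kummer and Artin-Schreier contributions add cleanly along shared ramified fibers is where the Castro-Moreno refinement, as formulated in \cite{Booker2022}, improves on naive applications of Weil.
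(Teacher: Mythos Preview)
The paper does not supply its own proof of this proposition: it is quoted verbatim as \cite[Lemma~2.1]{Booker2022}, attributed ultimately to Castro and Moreno, and used as a black box to derive the estimate~\eqref{weilestimate2}. There is therefore no argument in the paper to compare your proposal against.

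That said, your outline is the standard route to results of this type and is essentially what Castro--Moreno do. Two technical points in your sketch would need to be cleaned up before it becomes a proof. First, the Artin--Schreier cover should be $y^p-y=a\phi(x)$ over~$\FF$, not $y^p-y={\rm Abs}\tr(a\phi(x))$; the absolute trace enters only when you count $\FF$-rational points and unwind the character, not in the defining equation of the curve. Second, the inequality you write as $2g_C-2\le l+l_0+l_1-l_2-2$ does not by itself yield the constant $l+l_0+l_1-l_2-2$ in front of $\sqrt{p^{mt}}$: the character sum is a single Frobenius eigencomponent of $H^1$, not the full $\#C(\FF)-(p^{mt}+1)$, so the relevant quantity is the degree of the associated $L$-polynomial (equivalently the Artin conductor of the character on the fiber-product Galois group), not $2g_C$. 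Riemann--Hurwitz is still the tool, but it is applied to compute that conductor rather than the total genus. With those corrections your plan matches the literature proof.
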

By setting $\varphi(x)=1$ and $\phi(x)=\frac{f(x)}{x}$ so that $a\phi(x)\ne h(x)^p-h(x)+c$ for any $h(x)\in\FF[x]$ and $c\in\FF$, we receive the following estimate:
\begin{align}\label{weilestimate2}
\left|\sum_{x\in\FF\setminus\{0\}} \chi_a \left(\frac{f(x)}{x}\right)\right| \le (k-1)\sqrt{p^{mt}}.
\end{align}

\begin{proposition}\label{prop:char-sum}
	If $\cA=\FF\setminus\{0\}$ and $m=1$, 
	then every nonzero word in $\cT(\cA,k)$ has weight at least
	\begin{equation}\label{charsumbound}
		\frac{p-1}{p}\left(|\cA|-(k-1)\sqrt{p^t}\right)    
	\end{equation}
\end{proposition}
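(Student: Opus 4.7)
\textbf{Proof plan for Proposition~\ref{prop:char-sum}.}
The plan is to count the zero coordinates of a nonzero codeword via a standard character-sum expansion, apply the Weil estimate \eqref{weilestimate2} to each resulting exponential sum, and handle a small number of degenerate configurations by direct inspection.

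First I would fix a nonzero codeword $\vc = (\tr(f(\alpha)/\alpha))_{\alpha \in \cA}$ coming from some $f$ with $\deg f \le k-1$, and let $N_0$ denote the number of $\alpha \in \cA$ with $\tr(f(\alpha)/\alpha) = 0$, so that $\wt(\vc) = |\cA| - N_0$. Applying the orthogonality identity $\mathbbm{1}[y = 0] = \tfrac{1}{p}\sum_{a\in\BB} \omega^{ay}$ (valid for $y \in \BB$), and using the fact that for $m = 1$ and any $a \in \BB$ we have $\omega^{a \tr(z)} = \omega^{\tr(az)} = \chi_a(z)$, the count rewrites as
\begin{equation*}
N_0 \;=\; \frac{|\cA|}{p} \;+\; \frac{1}{p}\sum_{a \in \BB \setminus \{0\}} \sum_{\alpha \in \cA} \chi_a\!\left(f(\alpha)/\alpha\right).
\end{equation*}
The remaining task is to bound the $p-1$ inner exponential sums.

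In the generic case, where for every $a \in \BB \setminus \{0\}$ the rational function $a f(x)/x$ is not of the form $h(x)^p - h(x) + c$ with $h \in \FF[x]$ and $c \in \FF$, the Weil estimate \eqref{weilestimate2} bounds each inner sum by $(k-1)\sqrt{p^t}$ in absolute value. Substituting yields $N_0 \le |\cA|/p + \tfrac{p-1}{p}(k-1)\sqrt{p^t}$, which rearranges to exactly the bound \eqref{charsumbound} on $\wt(\vc)$.

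The main obstacle, and the step requiring care, is the degenerate case: suppose that for some $a_0 \in \BB \setminus \{0\}$ we have $a_0 f(x)/x = h(x)^p - h(x) + c$. Since the right-hand side is a polynomial, this forces $f(0) = 0$ and gives $f(x)/x = (1/a_0)(h(x)^p - h(x) + c)$. The key calculation would exploit that $a_0 \in \BB = GF(p)$, so $(1/a_0)^p = 1/a_0$ and hence $(1/a_0) h(\alpha)^p = ((1/a_0) h(\alpha))^p$; combined with $\tr(z^p) = \tr(z)$ this shows $\tr(f(\alpha)/\alpha) = \tr(c/a_0)$ is constant in $\alpha$. Since $\vc$ is nonzero by hypothesis, that constant is nonzero, forcing $\wt(\vc) = |\cA|$, which comfortably dominates \eqref{charsumbound} and completes the proof.
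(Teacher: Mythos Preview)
Your proposal is correct and follows essentially the same approach as the paper: both split into a degenerate case (where $f(x)/x$ is Artin--Schreier equivalent to a constant, forcing $\vc$ to be a nonzero constant vector of full weight) and a generic case handled by the Weil-type estimate \eqref{weilestimate2} applied to each $a\in\BB\setminus\{0\}$. The only cosmetic difference is that the paper parametrizes the degenerate case by $a_0=1$ (writing $f(x)=x h(x)^p-xh(x)+xb$), whereas you allow a general $a_0\in\BB\setminus\{0\}$; since $a_0\in GF(p)$ these are equivalent, and your explicit verification that $(1/a_0)h^p=((1/a_0)h)^p$ together with $\tr(z^p)=\tr(z)$ makes this transparent.
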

\begin{proof}
We distinguish between two cases.

\textit{Case 1}. $f(x)=x(h(x))^{p}-xh(x)+xb$ for some $h\in \mathbb{F}[x]$ and $b\in \mathbb{F}$. In this case,
\begin{align*}
c_j & =\tr\left(\frac{f(\alpha_{j})} {\alpha_{j}}\right) =\tr\left(h(\alpha_{j})^{p}\right)-\tr(h(\alpha_{j}))+\tr(b)\notag\\
&=\tr\left(h(\alpha_{j})\right)^{p}-\tr(h(\alpha_{j}))+\tr(b)=
\tr(b),
\end{align*}

In other words, $\vc$ is a multiple of the all-ones vector.

\textit{Case 2}. $f(x)\ne x(h(x))^{p}-xh(x)+xb$ for any $h\in \mathbb{F}[x]$ and $b\in \mathbb{F}$. In this case we can form the non-degenerate sum and apply an estimate~\eqref{weilestimate}. For $p$-th root of complex unity we can write down that
\begin{align}
\sum_{j=1}^{p^t-1}\left(\sum_{a\in \BB\setminus\{0\}}\omega^{ac_j}\right)
& =(p-1)(p^t-1-w(\vc))-w(\vc)\notag\\
&=(p-1)(p^{t}-1)-pw(\vc),   
\end{align}
where $w(\vc)$ is the Hamming weight of the codeword $\mathbf{c}$.

Utilizing the fact that $\omega^{a\textrm{Tr}(\frac{f(x)}{x})}$ for $a\in \mathbb{B}\setminus\{0\}$ is non-trivial additive character $\chi_a(\frac{f(x)}{x})$ we have
\begin{align}
\left|\sum_{a\in \mathbb{B}\setminus\{0\}}\sum_{j=1}^{p^t-1}w^{ac_j}\right|
& =\left|\sum_{a\in \mathbb{B}\setminus\{0\}}\sum_{x\in \mathbb{F}\setminus\{0\}}\chi_a(\frac{f(x)}{x})\right|\notag\\
& \leq \sum_{a\in \mathbb{B}\setminus\{0\}}\left|\sum_{x\in \mathbb{F}\setminus\{0\}}\chi_a(\frac{f(x)}{x})\right|.
\end{align}

Applying the estimate~\eqref{weilestimate} we have
\begin{equation}
\left|(p-1)(p^{t}-1)-pw(\vc)\right| \leq(p-1)(k-1)\sqrt{p^{t}}.
\end{equation}
Combining two cases, we get the proposition statement. 
\end{proof}

\begin{proposition}\label{prop:char-sum-new}
	If $\cA=\FF\setminus\{0\}$ and $m>1$,
	then every nonzero word in $\cT(\cA,k)$ has weight at least
	\begin{equation}\label{charsumbound2}
		\frac{p^m-p}{p^m}\left(|\cA|-(k-1)\sqrt{p^{mt}}\right)
	\end{equation}
\end{proposition}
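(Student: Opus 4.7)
The plan is to mimic the proof of Proposition~\ref{prop:char-sum} by replacing the orthogonality of $\mathbb{F}_p$-characters with that of additive characters of $\BB$. The key algebraic identity is the transitivity of trace,
\[
\chi^{\FF}_a(y) \;=\; \omega^{\mathrm{AbsTr}_{\FF/\mathbb{F}_p}(ay)} \;=\; \omega^{\mathrm{AbsTr}_{\BB/\mathbb{F}_p}(a\,\tr_{\FF/\BB}(y))}
\qquad (a \in \BB,\; y \in \FF),
\]
which lets us pass from the weight-encoding sum $\sum_{a \in \BB^{\times}} \omega^{\mathrm{AbsTr}_{\BB/\mathbb{F}_p}(a c_\alpha)}$ (which equals $p^m - 1$ when $c_\alpha = 0$ and $-1$ otherwise) to a sum of characters on $\FF$ amenable to the Weil estimate~\eqref{weilestimate2}. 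Summing over $\alpha \in \FF^{\times}$ yields the bookkeeping relation
\[
(p^m - 1)(p^{mt} - 1) \;-\; p^m\, w(\vc) \;=\; \sum_{a \in \BB\setminus\{0\}} \sum_{\alpha \in \FF\setminus\{0\}} \chi^{\FF}_a\!\left(\frac{f(\alpha)}{\alpha}\right).
\]

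I would then split into two cases, exactly parallel to the $m = 1$ argument. In \textbf{Case 1}, $f(x) = x h(x)^p - x h(x) + x b$ for some $h \in \FF[x]$ and $b \in \FF$. For each $\lambda \in \mathbb{F}_p^{\times}$, the scaling $\lambda f(x)/x$ remains of Artin--Schreier form (with constant $\lambda b$), so the corresponding inner sum equals the constant $(p^{mt}-1)\,\omega^{\lambda\, \mathrm{AbsTr}(b)}$. Summing over $\lambda$ gives $(p-1)(p^{mt}-1)$ if $\mathrm{AbsTr}(b) = 0$ and $-(p^{mt}-1)$ otherwise, while the remaining $p^m - p$ inner sums (for $a \in \BB \setminus \mathbb{F}_p$) are each bounded by $(k-1)\sqrt{p^{mt}}$ via the Weil estimate~\eqref{weilestimate2}. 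Substituting into the bookkeeping identity and rearranging gives exactly $w(\vc) \ge \frac{p^m - p}{p^m}(|\cA| - (k-1)\sqrt{p^{mt}})$ in the harder branch, and the much stronger bound $w(\vc) \ge (p^{mt} - 1) - \frac{p^m - p}{p^m}(k - 1)\sqrt{p^{mt}}$ in the other. In \textbf{Case 2}, $f(x)/x \notin \mathrm{AS}$; then $\lambda f(x)/x \notin \mathrm{AS}$ for every $\lambda \in \mathbb{F}_p^{\times}$, so the Weil estimate applies to $p - 1$ additional characters, producing an even stronger bound.

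The main obstacle is that the Weil estimate~\eqref{weilestimate2} requires non-degeneracy $af(x)/x \notin \mathrm{AS}$ for each \emph{individual} $a \in \BB \setminus \mathbb{F}_p$, not merely closure under $\mathbb{F}_p^{\times}$-scalars. The ``degenerate set'' $T = \{a \in \BB^{\times} : af(x)/x \in \mathrm{AS}\}$ automatically satisfies $\mathbb{F}_p^{\times} \cdot T \subseteq T$, but for $m \ge 3$ it may a priori contain additional $\mathbb{F}_p$-cosets outside $\mathbb{F}_p$. To handle this, one would show that when $T$ properly contains $\mathbb{F}_p^{\times}$, the $\mathbb{F}_p$-linearity of the Artin--Schreier constant map $a \mapsto c_a$ on $V = T \cup \{0\}$ forces the exponential sum $\sum_{a \in V \setminus \{0\}} \omega^{\mathrm{AbsTr}(c_a)}$ to equal either $|V| - 1$ (in which case additive Hilbert~90 confines each $c_\alpha$ to the $\mathbb{F}_p$-orthogonal complement $V^{\perp} \subset \BB$, reducing the problem to a lower-dimensional subcode handled by descent) or $-1$ (in which case the resulting bound exceeds the claim). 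Carrying this structural analysis of $T$ through carefully is the technically delicate step.
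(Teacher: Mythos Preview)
Your overall strategy matches the paper's proof: express the weight via orthogonality of the additive characters of $\BB$, obtain the identity
\[
(p^m-1)(p^{mt}-1)-p^m\,w(\vc)\;=\;\sum_{a\in\BB\setminus\{0\}}\sum_{\alpha\in\cA}\chi_a\!\left(\frac{f(\alpha)}{\alpha}\right),
\]
and bound the non-degenerate inner sums by~\eqref{weilestimate2}. Two points deserve comment.

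First, your case split (on whether $1\in T$) is less clean than the paper's (on whether $\mathfrak{B}=\{a\in\BB^\times: af/x\in\mathrm{AS}\}$ is empty). In your Case~2 you only know $\mathbb{F}_p^\times\cap T=\emptyset$; but $T$ could still be a nontrivial $\mathbb{F}_p^\times$-coset disjoint from $\mathbb{F}_p$, so the claim that Case~2 immediately yields a stronger bound is not yet justified.

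Second---and here you go beyond the paper---you correctly isolate the crux: the size of the degenerate set. The paper simply asserts $|\mathfrak{B}|\le p-1$ and bounds the degenerate contribution by $(p-1)|\cA|$. You observe that $T\cup\{0\}$ is an $\mathbb{F}_p$-subspace of $\BB$, which could a priori have dimension $\ge 2$ (for instance $g(x)=x^{p^2}-x$ with $m$ even gives $T\supseteq\mathbb{F}_{p^2}^\times$), and you sketch a descent via the linear map $a\mapsto c_a$ and the containment $c_\alpha\in V^\perp$. The concern is legitimate: the paper's one-line justification in fact only establishes $\mathfrak{B}\supseteq a_1\mathbb{F}_p^\times$, a \emph{lower} bound on $|\mathfrak{B}|$. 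So you have recovered the paper's argument and put your finger on its genuinely delicate step; however, your descent sketch is not worked out, so neither your proposal nor the paper's text gives a complete treatment of this point.
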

\begin{proof}
Let $\vc=\left(\tr\left(\frac{f(\alpha)} {\alpha}\right)\right)_{\alpha\in\cA}$ be a codeword of $\cT(\cA,k)$. Let $\lambda_1$ be the canonical additive character of $\BB$. By the orthogonality relation of additive characters, we deduce that
\begin{align*}
w(\vc)&=p^{mt}-1-\#\left\{\alpha\in\cA:\tr\left(\frac{f(\alpha)} {\alpha}\right)=0\right\}\\
&=p^{mt}-1-\frac{1}{p^m}\sum_{\alpha\in\cA}\sum_{a\in \BB}\lambda_1\left(a \tr\left(\frac{f(\alpha)} {\alpha}\right)\right)\\
&=p^{mt}-1-\frac{1}{p^m}\sum_{a\in \BB} \sum_{\alpha\in\cA}\chi_a\left(\frac{f(\alpha)} {\alpha}\right)\\
&=\frac{(p^{mt}-1)(p^m-1)}{p^m}-\frac{1}{p^m}\sum_{a\in \BB\setminus\{0\}} \sum_{\alpha\in\cA}\chi_a\left(\frac{f(\alpha)} {\alpha}\right)
\end{align*}
From the above equation, we have
\begin{equation}\label{prop:eq1}
\sum_{a\in \BB\setminus\{0\}} \sum_{\alpha\in\cA}\chi_a\left(\frac{f(\alpha)} {\alpha}\right)=(p^{mt}-1)(p^m-1)-p^mw(\vc)
\end{equation}
We distinguish between two cases

\textit{Case 1} $\frac{af(x)}{x}=(h(x))^{p}-h(x)+b$ for some $a\in \BB\setminus\{0\}$, $h(x)\in \FF[x]$ and $b\in \FF$. In this case, the number of such $a$ is at most $p-1$ and let $\mathfrak{B}$ be the collection of such $a$. In fact, if for $a_1$ from $\mathfrak{B}$ it holds that $\frac{a_1f(x)}{x}=(h(x))^{p}-h(x)+b$, then for $a_2$ from the same set it holds that $\frac{a_2f(x)}{x}=a_2a_1^{-1}((h(x))^{p}-h(x)+b)$. Hence, $\chi_{a_2}\left(\frac{f(x)} {x}\right)$ is a constant number for each $x\in\cA$ when $a_2a_1^{-1}$ belongs to the finite field with $p$ elements. Utilizing the estimate~\eqref{weilestimate2}, we have
\begin{align*}
&\left|\sum_{a\in \BB\setminus\{0\}} \sum_{\alpha\in\cA}\chi_a\left(\frac{f(\alpha)} {\alpha}\right)\right|\\
&\le\sum_{a\in \BB\setminus\{0\}}\left|\sum_{\alpha\in\cA}\chi_a\left(\frac{f(\alpha)} {\alpha}\right)\right|\\
&\le(p-1)\#\cA+\sum_{a\in \BB\setminus(\{0\}\cup \mathfrak{B})}\left|\sum_{\alpha\in\cA}\chi_a\left(\frac{f(\alpha)}{\alpha}\right)\right|\\
&\le(p-1)(p^{mt}-1)+(p^m-p)(k-1)\sqrt{p^{mt}}.
\end{align*}
By \eqref{prop:eq1}, we obtain that $w(\mathbf{c})\geq \frac{(p^m-p)\left((p^{mt}-1)-(k-1)\sqrt{p^{mt}}\right)}{p^m}$.

\textit{Case 2} $f(x)\neq x(h(x))^{p}-xh(x)+xb$ for any $a\in \BB\setminus\{0\}$, $h(x)\in \FF[x]$ and $b\in \FF$. Using a method analogous to \textit{Case 1}, we deduce that
\begin{align}
w(\vc)\geq \frac{(p^m-1)\left((p^{mt}-1)-(k-1)\sqrt{p^{mt}}\right)}{p^m}    
\end{align}
Taking the minimum over two cases, we get the proposition statement. 
\end{proof}

Combining Proposition~\ref{prop:char-sum} and Proposition~\ref{prop:char-sum-new} together, we get Theorem~\ref{thm:main}(iii).

\vspace{2mm}
\section{Modified Guruswami-Sudan Algorithm}\label{GS}
\vspace{2mm}
In this section, we study efficient bounded-distance decoders for $\cT(\cA,k)$.
Formally, we fix integer $e$, a codeword $\vc\in\cT(\cA,k)$ and output $\vy\in\FF_p^{|\cA|}$ 
such that $\vc$ and $\vy$ differ in at most $e$ positions. The input to the bounded-distance decoder is $\vy$ and our task is to find $\vc$ in polynomial time.
Note that for the bounded-distance decoder to succeed, it is necessary that $2e+1$ is at most the minimum distance for $\cT(\cA,k)$.

Recall the values of $d_1$ and $d_2$ in Theorem~\ref{thm:main}(i)~and~(ii). In both proofs (that is, Propositions~\ref{prop:grs-1}~and~\ref{prop:grs-2}), we demonstrated that $\cT(\cA,k)$ (or its equivalent code) is a subcode of some GRS code. Hence, we can apply any bounded-distance decoder for Reed-Solomon codes, like the Berlekamp-Welch algorithm, and correct any $e$ errors, where $e$ is at most $(\max\{d_1,d_2\}-1)/2$.

Therefore, it remains to find an efficient bounded-distance decoder that corrects $(d'-1)/2$ errors, where $d'$ is a lower bound on the minimum distance of $\cT(\cA,k)$. 
One such lower bound is the value $d_3$ given in Theorem~\ref{thm:main}(iii).
To this end, we modify the famous Guruswami-Sudan algorithm for list decoding to perform this task.
Unfortunately, we are unable to guarantee that we can correct up to $(d_3-1)/2$ errors. Nevertheless, we find some numerical examples where we are close to these values (see Table~\ref{table1}~and~\ref{table3}). 

First, we recall the following restatement of Guruswami-Sudan algorithm due to Koetter and Vardy~\cite[Theorem 2]{KoetterVardy2003}.

\begin{theorem}[Guruswami-Sudan~\cite{GuruswamiSudan1998}]\label{thm:GS}
	Fix $\Delta\le n$ and $\mu$.
	Set $\delta$ to be the smallest integer such that 
	$N_{1,\Delta}\triangleq \ceil{\frac{\delta+1}{\Delta}}\left(\delta-\frac{\Delta}{2}\floor{\frac{\delta}{\Delta}}+1\right)>\frac{n\mu(\mu+1)}{2}$.
	Next, set $e = n-\floor{\delta/\mu}$.
	Given $\vy\in \FF^n$, we can find {\em all} polynomials $F(X)$ of degree at most $\Delta$ such that 
	$F(\alpha_i) \ne y_i$ in at most $e$ positions.
	Let $\cF$ be the set of these polynomials. Furthermore, we can find all $F(X)$'s in polynomial  time (in $n$, $\mu$ and $|\cF|$).
\end{theorem}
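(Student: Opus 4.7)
The plan is the standard two-stage Guruswami-Sudan strategy: first produce a bivariate interpolating polynomial with controlled weighted degree, then extract the candidate $F(X)$'s as $Y$-roots of that polynomial. Concretely, I would construct a nonzero $Q(X,Y)\in\FF[X,Y]$ of $(1,\Delta)$-weighted degree at most $\delta$ (meaning every monomial $X^aY^b$ in $Q$ satisfies $a+b\Delta\le\delta$) that has a zero of multiplicity at least $\mu$ at each of the interpolation points $(\alpha_i,y_i)$, $i=1,\ldots,n$. A direct monomial count shows the space of admissible polynomials has dimension exactly $N_{1,\Delta}=\lceil(\delta+1)/\Delta\rceil\bigl(\delta-(\Delta/2)\lfloor\delta/\Delta\rfloor+1\bigr)$, while the multiplicity-$\mu$ condition at one point imposes $\binom{\mu+1}{2}=\mu(\mu+1)/2$ linear equations on the coefficients. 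The defining choice of $\delta$ gives $N_{1,\Delta}>n\mu(\mu+1)/2$, so the resulting homogeneous linear system has strictly more unknowns than equations and a nonzero solution $Q$ is guaranteed; moreover $Q$ can be computed in polynomial time by Gaussian elimination.

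Next I would verify that every $F(X)$ of degree at most $\Delta$ that agrees with $\vy$ on at least $n-e$ coordinates must satisfy $(Y-F(X))\mid Q(X,Y)$. Set $R(X)\triangleq Q(X,F(X))$. The weighted-degree bound on $Q$ together with $\deg F\le\Delta$ forces $\deg R\le\delta$. On the other hand, at each index $i$ with $F(\alpha_i)=y_i$, translating coordinates by $(\alpha_i,y_i)$ and using the definition of multiplicity shows that $R$ vanishes to order at least $\mu$ at $\alpha_i$; summing over the $n-e$ agreement positions gives at least $\mu(n-e)$ roots of $R$ counted with multiplicity. The choice $e=n-\lfloor\delta/\mu\rfloor$ is calibrated so that this total exceeds $\deg R$, hence $R\equiv 0$ and $Y-F(X)$ divides $Q$, as required.

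For the algorithmic half I would invoke a standard bivariate root-finding routine (for instance Roth-Ruckenstein) to list all $Y$-roots of $Q$ in $\FF[X]$ of degree at most $\Delta$; this runs in time polynomial in $n$, $\mu$, and the number of outputs, since the $Y$-degree of $Q$ is at most $\lfloor\delta/\Delta\rfloor$ and hence at most $|\cF|$. Each candidate is finally screened by counting disagreements with $\vy$ and discarding those exceeding $e$.

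The main obstacle, and the part requiring the most care, is the second stage: making the multiplicity definition play correctly with the substitution $Y\mapsto F(X)$ so that each agreement really does contribute a factor of $(X-\alpha_i)^\mu$ to $R(X)$, and then pinning down the floor/ceiling arithmetic so that $\mu(n-e)$ strictly exceeds $\delta$ under the given choice of $\delta$ and $e$. The interpolation step is just linear algebra, and the factorization step is a black-box invocation; it is the combinatorial bookkeeping linking $N_{1,\Delta}$, $\delta$, $\mu$, and $e$ that makes the statement tight.
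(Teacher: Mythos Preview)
The paper does not actually prove Theorem~\ref{thm:GS}; it merely quotes it as a known result, attributing the formulation to Koetter--Vardy and the algorithm to Guruswami--Sudan. Your proposal reproduces exactly the standard interpolation-plus-factorization argument from those sources, so in that sense it matches what the cited literature does and there is nothing to contrast.

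One small caution on the bookkeeping you already flag: with $e=n-\lfloor\delta/\mu\rfloor$, a codeword agreeing on exactly $n-e=\lfloor\delta/\mu\rfloor$ positions yields only $\mu\lfloor\delta/\mu\rfloor\le\delta$ roots of $R(X)$, which is not automatically strictly larger than $\deg R\le\delta$. The strict inequality you need comes from the minimality of $\delta$ (so that in fact one can take $Q$ of $(1,\Delta)$-weighted degree at most $\delta-1$ once $N_{1,\Delta}(\delta)$ first exceeds the constraint count), or equivalently from the precise Koetter--Vardy accounting; be sure to trace this through rather than asserting it.
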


Next, we described a simple procedure that allows us to correct errors for $\cT(\cA,k)$.

\vspace{2mm}

\noindent{\bf Modified Guruswami-Sudan Decoder}.

\noindent{\sc Input}: 
Integer $e$ as defined in Theorem~\ref{thm:GS} (note that $e$ is defined by some integer $\mu$) and $\vy\in\FF_p^n$.\\[1mm]
\noindent{\sc Output}: $\cL\subseteq\cT(\cA,k)$ such that $\vc$ and $\vy$ differs in at most $e$ positions.
\begin{enumerate}[(Step 1)]
	\item We apply Guruswami-Sudan algorithm with field $\FF_q$ and $\Delta$ as defined in \eqref{eq:Delta}. Hence, after this step, we have a set of polynomials $\cF$.
	
	\item For each $F(X)\in \cF$, we determine whether the word $\vc\triangleq(F(\alpha_i)/\alpha_i^{p^{t-1}})_{i\in[n]}$ belongs to $\cT(\cA,k)$. We add $\vc$ to $\cL$ if and only if it belongs to $\cT(\cA,k)$.
\end{enumerate}

\begin{proposition}\label{prop:modified-gs}
Let $e$ be as defined earlier. Suppose $\cT(\cA,k)$ has minimum distance at least $d'$.
If $e\le (d'-1)/2$, the set $\cL$ returned by the modified Guruswami-Sudan decoder has size at most one.
Furthermore, the set $\cL$ has size at most one. 
\end{proposition}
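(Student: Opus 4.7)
The plan is to argue via a classical unique-decoding argument. The only nontrivial content of the proposition is the uniqueness guarantee; there is no need to analyze the internals of the Guruswami-Sudan algorithm beyond what is stated in Theorem~\ref{thm:GS}. I will use only the defining property of $\cL$ (every $\vc \in \cL$ lies in $\cT(\cA,k)$ and differs from $\vy$ in at most $e$ positions), together with the hypothesis that the minimum distance of $\cT(\cA,k)$ is at least $d'$.

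First, I would suppose toward contradiction that $\cL$ contains two distinct codewords $\vc_1,\vc_2 \in \cT(\cA,k)$. By construction of the decoder (Step~2), both $\vc_1$ and $\vc_2$ satisfy $d_H(\vc_i,\vy)\le e$ for $i=1,2$, where $d_H(\cdot,\cdot)$ denotes Hamming distance. The triangle inequality then gives
\begin{equation*}
d_H(\vc_1,\vc_2) \;\le\; d_H(\vc_1,\vy)+d_H(\vy,\vc_2) \;\le\; 2e.
\end{equation*}

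Next, I would use the hypothesis $e\le (d'-1)/2$, which rearranges to $2e \le d'-1 < d'$. Since $\vc_1-\vc_2$ is a nonzero codeword of the $\BB$-linear code $\cT(\cA,k)$ (here I use that the difference of two codewords is a codeword, and that $\vc_1\ne \vc_2$ by assumption), its Hamming weight is at least $d'$. This contradicts $d_H(\vc_1,\vc_2)\le 2e < d'$. Hence $\cL$ contains at most one element.

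I do not foresee a main obstacle here: everything follows from the triangle inequality and the definition of minimum distance. The one point to double-check is that Step~2 of the decoder only admits into $\cL$ words which genuinely lie in $\cT(\cA,k)$ and are of the form derived from some $F\in \cF$ (so that $d_H(\vc,\vy)\le e$ is inherited from the Guruswami-Sudan guarantee on $F$); this is immediate from the decoder description. The somewhat awkward repetition in the statement (``the set $\cL$ has size at most one'' appears twice) is therefore established by a single short contradiction argument.
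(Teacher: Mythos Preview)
Your proof is correct and follows essentially the same route as the paper: both argue by contradiction that two distinct codewords in $\cL$ would be within Hamming distance $2e<d'$ of each other, contradicting the minimum-distance hypothesis. The paper's proof additionally remarks that Step~2 can be carried out efficiently via a parity-check matrix for $\cT(\cA,k)$, but this is not required by the proposition as stated (the repeated ``size at most one'' clause you flagged is indeed an artifact, likely intended to be an efficiency claim).
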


\begin{proof}
The fact that $|\cL|$ is at most one follows directly from usual coding arguments.
Suppose otherwise that $\cL$ comprises two words $\vc_1$ and $\vc_2$ that differ from $\vy$ in at most $e$ positions.
Then the Hamming distance of $\vc_1$ and $\vc_2$ is at most $2e$, contradicting the distance property of $\cT(\cA,k)$.

Thus, it remains to show that Step 2 can be performed efficiently. 
Since $\cT(\cA,k)$ is a $\FF_p$-linear code and so, there exists check matrix $\vH$ over $\FF_p$. 
Therefore, determining whether $\vc$ belongs to $\cT(\cA,k)$ is equivalent to checking at $\vc\vH^T=\vzero$.
This completes the proof.
\end{proof}
\vspace{2mm}

\section{Numerical results}\label{NR}
\vspace{2mm}
In this section, we provide a comparison of the number of correctable errors corresponding to the different lower bounds on minimum distance given in Theorem~\ref{thm:main}.
In Tables~\ref{table1}~and~\ref{table3},
we set $(p,m,t)=(5,1,2)$ and $(p,m,t)=(2,4,3)$, respectively, and vary the parameter $k$.
In addition, we also determine the number of correctable errors that the modified Guruswami-Sudan algorithm according to Proposition~\ref{prop:modified-gs}. 
We see that for moderate values of $k$, the modified Guruswami-Sudan algorithm is able to correct up beyond the bounds promised by the degree and lifted decoding bounds. 
Unfortunately, in most cases, we fall short of the character sum bound and it is interesting if we are able to efficiently decode close to the latter bound.
For completeness, in Table~\ref{table1}, we compute the exact the minimum distance of repair-trace code.

\begin{table}[!t]
\fontsize{8}{11}\selectfont
\begin{tabularx}{0.5\textwidth}{ |c| *{7}{Y|} }
\hline
\multirow{3}{*}{$k$} & \multirow{3}{*}{Exact} & \multirow{2}{*}{Degree} & {Lifted} & {Character} & \multicolumn{2}{c|}{Modified}\\
& & & decoding & sum &\multicolumn{2}{c|}{Guruswami-Sudan}\\ \cline{6-7}
& & bound & bound & bound & $e$ & $\mu$\\ \hline
$1$ & $9$ & $9$ & $1$ & $9$   &  $9$   & $1$\\ \hline
$2$ & $9$ & $9$ & $1$  & $7$   &  $9$  & $1$ \\ \hline
$3$ & $6$  & $6$ & $1$   & $5$  &   $6$ & $1$\\ \hline
$4$ & $5$  & $4$ & $1$   & $3$   &  $4$ & $1$\\ \hline
$5$ & $4$ & $1$ & $1$ & $1$ & $1$ & $1$\\ \hline
$6$& $4$ & $-$ & $1$ & $-$ & $-$ & $-$\\ \hline
$7$& $4$ & $-$ & $1$ & $-$ & $-$ & $-$\\ \hline
$8$ & $4$ & $-$ & $1$ & $-$ & $-$ & $-$\\ \hline
$9$ & $1$ & $-$ & $1$ & $-$ & $-$ & $-$\\ \hline
\end{tabularx}
\caption{Exact and lower bounds for the number of correctable errors of the repair-trace code $\cT(\cA,k)$. Here, $\BB=\textrm{GF}(5)$ and $\FF=\textrm{GF}(25)$, and lower bounds are obtained from Theorem~\ref{thm:main}.}\label{table1}
\end{table}

\begin{table}[!t]
\fontsize{8}{11}\selectfont
\begin{tabularx}{0.5\textwidth}{ |c| *{6}{Y|} }
\hline
\multirow{3}{*}{$k$}  & \multirow{2}{*}{Degree} & {Lifted} & {Character} & \multicolumn{2}{c|}{Modified}\\
&  & decoding & sum &\multicolumn{2}{c|}{Guruswami-Sudan}\\ \cline{5-6}
&  bound & bound & bound & $e$ & $\mu$\\ \hline
$1$ & $1919$ & $7$ & $1791$ &  $1919$   & $1$\\ \hline
$2$  & $1919$ & $7$ & $1763$ &  $1919$   & $1$\\ \hline
$3$  & $1791$ & $7$ & $1735$ &  $1791$   & $1$\\ \hline
$4$  & $1663$ & $7$ & $1707$ &  $1707$   & $1$\\ \hline
$5$  & $1535$ & $7$ & $1679$ &  $1679$   & $1$\\ \hline
$6$  & $1407$ & $7$ & $1651$ &  $1651$   & $3$\\ \hline
$7$  & $1279$ & $7$ & $1623$ &  $1587$   & $473$\\ \hline
$8$  & $1151$ & $7$ & $1595$ &  $1386$   & $424$\\ \hline
$9$  & $1023$ & $7$ & $1567$ &  $1199$   & $406$\\ \hline
$10$  & $895$ & $7$ & $1539$ &  $1023$   & $279$\\ \hline
$11$  & $767$ & $7$ & $1511$ &  $857$   & $277$\\ \hline
$12$  & $639$ & $7$ & $1483$ &  $699$   & $243$\\ \hline
$13$  & $511$ & $7$ & $1455$ &  $548$   & $198$\\ \hline
$14$  & $383$ & $7$ & $1427$ &  $404$   & $496$\\ \hline
$15$  & $255$ & $7$ & $1399$ &  $264$   & $120$\\ \hline
$16$ & $127$ & $7$ & $1371$ &  $130$   & $106$\\ \hline
$17$ & $-$ & $7$ & $1343$ &  $-$   & $-$\\ \hline
\end{tabularx}
\caption{Exact and lower bounds for the number of correctable errors of the repair-trace code $\cT(\cA,k)$. Here, $\BB=\textrm{GF}(16)$ and $\FF=\textrm{GF}(16^3)$, and lower bounds are obtained from Theorem~\ref{thm:main}.}\label{table3}
\end{table}

Finally, to further justify our approach, we compare the repair bandwidth of our approach with the repair bandwidth of the classical approach.
Specifically, we consider an $\rs(\cA,k)$ with distance $n-k+1$ that corrects $e$ errors and $s$ erasures whenever $2e+s \le n-k$.
Therefore, in the classical approach, in the presence of $e$ erroneous helper nodes, 
we need to download at least $n-(n-k-2e)=k+2e$ symbols to repair any failed nodes.
In other words, the bandwidth is $(k+2e)\left\lceil\log_2 p^{mt}\right\rceil$ bits. 

On the other hand, suppose that $\cT(\cA,k)$ has minimum distance $d_*$.
Again, we have that $\cT(\cA,k)$ corrects $e$ errors and $s$ erasures whenever $2e+s \le d_*-1$. 
Then repeating the same computation as before, we obtain the bandwidth $(n-d_* +2e)\left\lceil\log_{2} p^m\right\rceil$ bits.
In Fig.~\ref{fig1}, we consider the case $p=5$, $m=1$, $t=2$ and $n=25$.
We then vary the number of number of erroneous helper nodes and determine the corresponding bandwidth (according to our estimates of the minimum distance). 
We see that when the number of erroneous helper nodes is moderate, our approach has savings for repair bandwidth.

\begin{figure}
\centering
\includegraphics[width=0.5\textwidth]{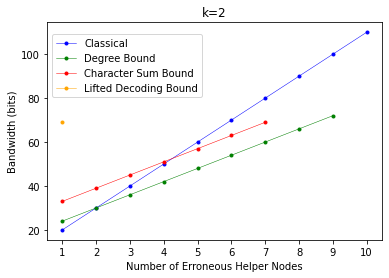}
\includegraphics[width=0.5\textwidth]{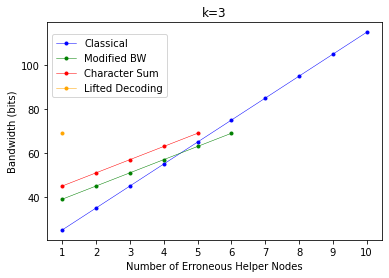}
\caption{Repair bandwidth of the various schemes. Here, we consider $\rs(\FF,2)$ with $\FF = \textrm{GF}(25)$, $\BB = \textrm{GF}(5)$. }
\label{fig1}
\end{figure}


	


\section{Conclusion}\label{Conclusion}
\vspace{2mm}
We investigate the Reed-Solomon repair problem in the presence of erroneous information from helper nodes under Guruswami-Wootters scheme. We consider the collection of downloaded traces as code and investigate its code-distance properties. Three lower bounds on its minimum distance and modification of the famous Guruswami-Sudan algorithm to efficiently correct errors close to these bounds are proposed. 
However, this is just the tip of the iceberg, and we point out several open questions: is it possible to generalize this approach to repair schemes based on subspace polynomials \cite{Dau2017,Kiah2021,Tamo2022ii}? \textcolor{black}{Do all of our results hold for non-full-length Reed-Solomon codes? How do these results compare to the parameters of existing polynomial trace codes?}

{\section*{Acknowledgements.}
This research/project is supported by the National Research Foundation, Singapore under its Strategic Capability Research Centres Funding Initiative, Singapore Ministry of Education Academic Research Fund Tier 2 Grants MOE2019-T2-2-083 and MOE-T2EP20121-0007, \textcolor{black}{and Nanyang Technological University Research Grant No. 04INS000047C230GRT01.} Any opinions, findings and conclusions or recommendations expressed in this material are those of the author(s) and do not reflect the views of National Research Foundation, Singapore.}

\balance
\printbibliography

\end{document}